\newcommand{\ignore}[1]{}
\newcommand{\beqn}{\begin{eqnarray*}}
\newcommand{\eeqn}{\end{eqnarray*}}
\newcommand{\vc}[1]{\mathbf{#1}}
\newcommand{\EE}{{\mathcal E}}
\newcommand{\GG}{{\mathcal G}}
\newcommand{\NN}{{\mathcal N}}
\newcommand{\XX}{{\mathcal X}}
\renewcommand{\epsilon}{\varepsilon}
\newcommand{\beq}{\begin{eqnarray*}}
\newcommand{\eeq}{\end{eqnarray*}}
\newcommand{\lra}{\leftrightarrow}
\tikzset{>={Latex[width=1.5mm,length=1.5mm]}}
\newcommand{\RR}{\mathbf R}
\begin{document}

\title{Recognizing Distance-Count Matrices is Difficult}
\author{Paolo Boldi\inst{1} \and
	Flavio Furia\inst{1} \and
	Chiara Prezioso\inst{1} \and
	Ian Stewart\inst{2}
}
\institute{
  \begin{minipage}[t]{0.48\textwidth}
    \centering
    $^1$ Computer Science Dept. \\
    Università degli Studi \\
    Milano, Italy
  \end{minipage}
  \hfill
  \begin{minipage}[t]{0.48\textwidth}
    \centering
    $^2$ Mathematics Institute \\
    University of Warwick \\
    Coventry CV4 7AL, UK
  \end{minipage}
}
\authorrunning{Paolo Boldi et al.}



\maketitle

\begin{abstract}
Axiomatization of centrality measures often involves proving that something cannot hold by providing a counterexample (i.e., a graph for which that specific centrality index fails to have a given property). In the context of geometric centralities, building such counterexamples requires constructing a graph with specific distance counts between nodes, as expressed by its distance-count matrix.
We prove that deciding whether a matrix is the distance-count matrix of a graph is strongly NP-complete. This negative result implies that a brute-force approach to building this kind of counterexample is out of question, and cleverer approaches are required.
\end{abstract}

\section{Introduction and Related Work}

The distance-count matrix (DCM) of a graph is a matrix whose $n$ rows correspond to the vertices, and where the $k$-th column for vertex $v$ contains the number of vertices whose (shortest-path) distance to $v$ is equal to $k$. 
The DCM contains a lot of information about the graph itself, and it is a natural object to study, especially in the context of social network analysis: a large family of centrality measures, called geometric centralities~\cite{BoVAC}, are those that can be expressed as a function of the DCM of a graph. This family includes degree centrality, closeness~\cite{BavCPTOG}, Lin centrality~\cite{LinFSR}, harmonic centrality~\cite{BoVAC}, and many others. The subfamily of linear geometric centralities alone was also studied in the pioneering works in Kishi \emph{et al.}~\cite{KisOCFG,KiTTCFDG}, and more recently in~\cite{SkSADBC,BFPLC}.

The DCM implicitly contains other information, such as the eccentricity of all vertices (and their 
distribution), the diameter and effective diameter~\cite{LKFGEDSD} of the graph, the distance distribution~\cite{BBRFDS}, and the graph Wiener index~\cite{wiener1947structural,rouvray2002rich}. The problem of computing or approximating the DCM is therefore extremely important in practice and challenging, especially for large graphs~\cite{boldi2013core}. While building the DCM of a given graph is relatively easy, the main result we prove is that deciding whether a matrix is a DCM is strongly NP-complete. This negative result is especially relevant when trying to build counterexamples to specific properties of geometric centralities: in such a situation, it may be possible to find a candidate DCM that works as a counterexample, but then the problem is to determine whether that candidate matrix \emph{is} the actual DCM of some graph. Our result implies that this road is basically impossible, so we must explore a different technique for finding a counterexample.

Determining whether a given sequence of integers is a graphical degree sequence, that is, the degree sequence of an undirected graph is a well-known problem in graph theory: among the first results about this problem are the celebrated Erd\H{o}s-Gallai theorem~\cite{erdHos1960grafok}, characterizing graphical degree sequences, and the Havel-Hakimi algorithm~\cite{havel1955remark,hakimi1962realizability}, which gives a constructive way to check in polynomial time whether a sequence is a graphical degree sequence, and provides, in the positive case, a possible realization of the sequence as a graph. Later, the problem was also studied in the context of directed graphs~\cite{kleitman1973algorithms}, where instead of the degree sequence one considers the in-degree and out-degree sequences. 

Another quite natural generalization of the problem is to take a list of pairs of integers $(d_i,d_i')$ and determine whether this is a second-order degree sequence of a graph. Such a sequence contains, for every vertex $i$, the number of vertices $d_i$ at distance $1$ (i.e., its degree) and the number of vertices $d_i'$ at distance $2$. This version of the problem, however, is NP-complete~\cite{ErMNASLDSPE}. 
In this paper, we consider an even more general version of the problem, where our input is the whole distance-count matrix of a graph.

The intuition that recognizing DCMs may itself be NP-complete comes from the fact that the first two columns of the DCM of a graph are precisely its second-order degree sequence: in the light of~\cite{ErMNASLDSPE} we can expect that the problem of recognizing DCMs is at least as hard. But the DCM contains much more information than the second-order degree sequence: it contains the number of vertices at distance $k$ for every $k$, and therefore it is a complete description of the distances in the graph. So, we may think that this additional knowledge could make the problem easier.
Formally, there is no easy polynomial reduction between the two problems, in either direction.

The rows of the DCM have been considered in the graph-theoretical literature under the name of distance degree sequences (or dds)~\cite[Section 9.2]{BH90}, and were studied in some special scenarios (e.g., to determine which graphs have vertices with pairwise distinct dds, and which have the same dds for all vertices~\cite{huilgol2014distance}), but not in the general case.

\section{Notation}

A \emph{graph} $\GG=(\NN_\GG,\EE_\GG)$ is given~\cite{BolMGT} by a finite set of nodes $\NN_\GG$ and a set of arcs $\EE_\GG\subseteq \NN_\GG \times \NN_\GG$. (For this and similar notations, the subscript $\GG$ is dropped whenever it is clear from the context.)
Without loss of generality we let $\NN=[n]=\{0,1,\dots,n-1\}$ where $n$ is the number of nodes.
We write $x\to y$ to mean that $x,y\in \NN$ and $(x,y)\in \EE$; we say that $x$ is a \emph{predecessor} of $y$, that $y$ is a \emph{successor} of $x$, and that $x$ ($y$, respectively) is the \emph{tail} (\emph{head}, respectively) of the arc $x \to y$.
A graph is \emph{undirected} iff $x\to y$ implies $y\to x$. For undirected graphs, we write simply $x \lra y$, and say that $x$ and $y$ are \emph{adjacent}.

A \emph{path} of length $k$ from $x$ to $y$ is a sequence $\pi=(x_0,x_1,\dots,x_k)$ of nodes such that $x=x_0$, $y=x_k$ and $x_i \to x_{i+1}$ for all $i\in [k]$; the arcs $x_i \to x_{i+1}$ are said to \emph{belong} to the path. We say that a path is \emph{simple} if it does not contain any node twice. 

The \emph{(shortest path) distance} from $x$ to $y$ in $\GG$, denoted by $d_\GG(x,y)$, is the length of a shortest path from $x$ to $y$, or $\infty$ if no path from $x$ to $y$ exists. A \emph{(strongly) connected graph} is one where the distance between every pair of nodes is finite (the adverb ``strongly'' is usually omitted for undirected graphs). 

In a connected undirected graph, the distance function $d(-,-)$ is a metric, that is, for all $x,y,z \in V$ (1) $d(x,y)\geq 0$ and equality holds if and only if $x=y$; (2) $d(x,y)=d(y,x)$ (symmetry); (3) $d(x,y)\leq d(x,z)+d(z,y)$ (triangle inequality). In a strongly connected directed graph symmetry does not hold in general, while the other two properties remain true.

The \emph{(in)-eccentricity} of a node $x$ in $\GG$ is defined as the maximum finite $d(y,x)$ as $y$ ranges over all nodes of $\GG$. The maximum eccentricity of all nodes in $\GG$ is called the \emph{diameter} of $\GG$.

\begin{figure}
    \centering
        \begin{tikzpicture}[scale=0.46, ->, node distance=0.8cm and 0.8cm, every node/.style={draw, circle, minimum width=0.6cm, 
    minimum height=0.4cm}]
        \node (0) at (0,0) {0};
        \node (3) at (2,-2) {3};
        \node (4) at (0,-4) {4};
        \node (6) at (-2,-6) {6};
        \node (7) at (-2,-8) {7};
        \node (5) at (0.9,-7) {5};
        \node (1) at (-3,-10) {1};
        \node (2) at (-1,-10) {2};
    
        \draw (0) to (3);
        \draw (0) to[bend right] (4);
        \draw (3) to (4);
        \draw (4) to[bend right] (0);
        \draw (4) to (6);
        \draw (6) to (7);
        \draw (7) to (1);
        \draw (7) to[bend right] (2);
        \draw (2) to[bend right] (7);
        \draw (1) to[bend left] (6);
        \draw (6) to (5);
        \draw (5) to (3);
        \draw (3) to[bend left=40] (2);
        \draw (2) to (4);
        \draw (1) to[bend left=40] (0);
    \end{tikzpicture}
    \caption{A strongly connected graph $\GG$.\label{fig:graph}}
\end{figure}

\section{Distance-count matrices (DCM)}

We start by giving some definitions:

\begin{definition}[DCM and CDCM]
Given a graph $\GG$, a node $x$ and a natural number $k$, define 
\begin{center}
    \begin{tabular}{lcl}
    $N_{\GG,k}(x) = \{y \in \NN \mid d(y,x)=k\}$ & $\quad$ & $n_{\GG,k}(x) = \left|N_{\GG,k}(x)\right|$\\
    $M_{\GG,k}(x) = \{y \in \NN \mid d(y,x)\leq k\}$ & $\quad$ & $m_{\GG,k}(x) = \left|M_{\GG,k}(x)\right|$.
    \end{tabular}
\end{center}

The \emph{distance-count matrix (DCM)} of $\GG$ is the matrix $N=N_\GG \in \RR^{n \times n}$ such that $n_{i,k}=n_{\GG,k}(i)$, i.e., the number of nodes at distance exactly $k$ to $i$.
The \emph{cumulative distance-count matrix (CDCM)} of $\GG$ is the matrix $M=M_\GG \in \RR^{n \times n}$ such that $m_{i,k}=m_{\GG,k}(i)$, i.e., the number of nodes at distance at most $k$ to $i$.
(It is convenient to assume that vector and matrix elements are indexed starting from zero.)
\end{definition}

Observe that for all $k$,
$M_k(x) \setminus M_{k-1}(x) = N_k(x)$
where we conveniently assume that $M_{-1}(x)=\emptyset$.
The definitions describe the set of nodes that are found by moving further and further away from a node. We can think of it as a circular wave that starts from node $x$: as $k$ increases we find nodes that are more and more distant from $x$. Eventually, after $k$ has reached the eccentricity of $x$, no more nodes are found. We can examine either the set (or count) of nodes at any given distance from the centre (the functions $N$ and $n$), or (equivalently) the cumulative set or count of nodes found along the way (the functions $M$ and $m$).  Also observe that we are computing distances \emph{to} $x$, not \emph{from} $x$: this fact is a matter of convention, all our results hold also if we consider the distance from $x$ to $y$ instead of the distance from $y$ to $x$; in the undirected case, the two are the same.

\begin{figure}
    \centering
    \begin{tabular}{cc}
    $N_\GG=
        \begin{pNiceMatrix}
        1 & 2 & 3 & 2 & 0 & 0 & 0 & 0\\
        \Block[fill=black!25]{1-8}{}
        1 & 1 & 2 & 2 & 2 & 0 & 0 & 0\\
        1 & 2 & 3 & 2 & 0 & 0 & 0 & 0\\
        \Block[fill=black!25]{1-8}{}
        1 & 2 & 3 & 2 & 0 & 0 & 0 & 0\\
        1 & 3 & 3 & 1 & 0 & 0 & 0 & 0\\
        1 & 1 & 2 & 4 & 0 & 0 & 0 & 0\\
        1 & 2 & 4 & 1 & 0 & 0 & 0 & 0\\
        1 & 2 & 3 & 2 & 0 & 0 & 0 & 0            
        \end{pNiceMatrix}
    $
    &
    $ M_\GG=
        \begin{pNiceMatrix}
        1 & 3 & 6 & 8 & 8 & 8 & 8 & 8\\
        1 & 2 & 4 & 6 & 8 & 8 & 8 & 8\\
        1 & 3 & 6 & 8 & 8 & 8 & 8 & 8\\
        1 & 3 & 6 & 8 & 8 & 8 & 8 & 8\\
        1 & 4 & 7 & 8 & 8 & 8 & 8 & 8\\
        1 & 2 & 4 & 8 & 8 & 8 & 8 & 8\\
        1 & 3 & 7 & 8 & 8 & 8 & 8 & 8\\
        1 & 3 & 6 & 8 & 8 & 8 & 8 & 8            
        \end{pNiceMatrix}
    $ 
    \end{tabular}
    \caption{The DCM $N_\GG$ and CDCM $M_\GG$ for the graph $\GG$ of Figure~\ref{fig:graph}. The highlighted rows correspond to $n_{\GG,-}(1)$ and $n_{\GG,-}(3)$. \label{fig:matrix}}
\end{figure}

\subsection{Graphical sequences for undirected graphs}

The standard reference about distances in graphs is \cite{BH90}: this book includes material on DCMs under different terminology for the undirected case.
The {\em distance degree sequence} (or $dds$) of node $v$ of an undirected graph (see \cite[Section 9.2]{BH90})
is the vector 
\[
dds(v) = (d_0(v), d_1(v), \ldots, d_{e(v)}(v))
\]
where $d_i(v)$ is the number of nodes $w$ such that $d(v,w) = i$,
and $e(v)$ is the {\em eccentricity} of $v$, which is the maximum
value of $d(v,w)$ for $w \in \GG$. Thus, $dds(v)$ is row $v$
of the DCM, truncated to remove the final zeros.

The {\em distance degree sequence (or dds)} $dds(\GG)$ of the graph $\GG$ is the list of all $dds$'s of its nodes, listed including multiplicity
(see \cite[Section 9.2]{BH90}).
This is essentially the DCM with zero entries removed.

Also of interest is the {\em degree sequence} of an undirected graph $\GG$ (see \cite[Section 1.1]{BH90}),
which is the list of degrees of nodes arranged in nonincreasing order.
Up to permutation of the nodes, this is column $1$ of the DCM of $\GG$.

Not all sequences of positive integers can be a degree sequence.
A {\em graphical degree sequence} is a degree sequence 
that can be realised by an undirected graph.
A complete characterization of graphical degree sequences for undirected graphs is given by the Erd\H{o}s-Gallai theorem:
\begin{theorem}[\cite{erdHos1960grafok}]
    Consider a sequence $d_1 \geq d_2 \geq \dots \geq d_p$ of $p$ positive integers. This sequence is a graphical degree sequence if and only if its sum is even and for every $n=1,\dots,p-1$
    \begin{equation}
        \label{equ:eg}
        \sum_{k=1}^n d_k \leq n(n-1) + \sum_{k=n+1}^p \min(n, d_k).
    \end{equation}
\end{theorem}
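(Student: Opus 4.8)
The plan is to prove the two directions separately. Necessity --- every graphical sequence satisfies the parity condition and the inequalities~\eqref{equ:eg} --- is a short double-counting argument, whereas sufficiency is the substantive part, which I would establish by induction on the degree sum $\sigma=\sum_{k=1}^p d_k$, combined with an edge-swapping (``$2$-switch'') argument at each step.

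For necessity, suppose $\GG$ is an undirected graph on vertex set $\{1,\dots,p\}$ with $\deg(i)=d_i$ and $d_1\ge\dots\ge d_p$. Then $\sum_{k=1}^p d_k=2\,|\EE_\GG|$ is even. Fix $n\in\{1,\dots,p-1\}$ and put $S=\{1,\dots,n\}$, a set of $n$ vertices of largest degree; let $e(S)$ be the number of edges inside $S$ and $e(S,\overline S)$ the number of edges with exactly one endpoint in $S$. Then $\sum_{k=1}^n d_k = 2\,e(S)+e(S,\overline S)$, where $2\,e(S)\le 2\binom{n}{2}=n(n-1)$, and where, for each vertex $k\in\overline S=\{n+1,\dots,p\}$, the number of its neighbours inside $S$ is at most $\min(n,d_k)$ --- at most $n$ since $|S|=n$, and at most $d_k$ since that is its degree --- so that $e(S,\overline S)\le\sum_{k=n+1}^p\min(n,d_k)$; adding the two bounds yields~\eqref{equ:eg}.

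For sufficiency I would induct on $\sigma$; since the parts are positive we have $\sigma\ge p$, and once $\sigma$ is small the realisation is immediate (for instance a perfect matching when every $d_k=1$, which exists precisely because $\sigma=p$ is then even). Given a sequence satisfying parity and all the inequalities~\eqref{equ:eg}, the idea is to decrement two carefully chosen parts $d_i,d_j$ by $1$, re-sort (discarding a resulting zero if one appears), and check that the new sequence still satisfies~\eqref{equ:eg}; its degree sum is $\sigma-2$, so by the inductive hypothesis it is realised by a graph $H$ on the same vertex set, in which $i$ and $j$ now have degrees $d_i-1$ and $d_j-1$. If $i$ and $j$ are non-adjacent in $H$ we add the edge $ij$ and are done; if they are adjacent, then --- using that $\deg_H(i)=d_i-1\le p-2$, so $i$ has a non-neighbour other than $j$ --- a short sequence of $2$-switches turns $H$ into another realisation of the reduced sequence in which $i$ and $j$ are non-adjacent, and we finish as before.

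The main obstacle is precisely the reduction step: choosing which two parts to decrement and then verifying that \emph{every} Erd\H{o}s--Gallai inequality~\eqref{equ:eg} is preserved. The standard choice (following Choudum) is to lower $d_1$ together with the last part equal to the current minimum value, and the verification splits into cases according to where the two decremented positions sit relative to the cut index $n$, tracking how re-sorting shifts the partial sum $\sum_{k=1}^n d_k$ on the left of~\eqref{equ:eg} and the truncated sum $\sum_{k=n+1}^p\min(n,d_k)$ on the right; this bookkeeping is where essentially all the work of the theorem is concentrated. (An alternative route proves sufficiency by showing that~\eqref{equ:eg} is inherited by the Havel--Hakimi reduction of the sequence, but the case analysis there is of comparable difficulty.) By contrast, the edge-swapping step is routine: all labelled realisations of a fixed degree sequence are connected under $2$-switches, so one can always reach a realisation in which a prescribed pair of vertices, not both of degree $p-1$, is non-adjacent, without changing any degree.
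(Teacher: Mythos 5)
The paper does not prove this theorem at all: it is quoted verbatim from the cited reference \cite{erdHos1960grafok} as a known classical result, so there is no in-paper argument to compare yours against. Judged on its own terms, your necessity direction is complete and correct: the identity $\sum_{k=1}^n d_k = 2e(S)+e(S,\overline{S})$ for $S=\{1,\dots,n\}$, together with $e(S)\le\binom{n}{2}$ and the bound $\min(n,d_k)$ on the number of $S$-neighbours of each outside vertex, is exactly the standard double-counting proof of \eqref{equ:eg}, and the parity of the degree sum is immediate from the handshake lemma.

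The sufficiency direction, however, is a plan rather than a proof. You correctly identify the Choudum-style strategy (induct on $\sigma$, decrement two parts, realise the reduced sequence, restore the edge via $2$-switches), and you correctly identify where the difficulty lies --- but you then stop precisely there. The verification that the decremented, re-sorted sequence still satisfies \emph{every} inequality \eqref{equ:eg} is the entire mathematical content of the hard direction; without the case analysis on where the two decremented positions fall relative to the cut index $n$ (and on how the re-sorting and the $\min(n,d_k)$ terms move), nothing has been established. Two smaller points are also left loose: the choice of which second part to decrement is not pinned down (different choices make the verification succeed or fail, and ``the last part equal to the current minimum'' needs to be checked to keep the sequence non-increasing after the decrement), and the $2$-switch step, while standard, relies on a connectivity-of-realisations lemma that you invoke without proof. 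As submitted, this is an accurate roadmap to a known proof, not a self-contained argument; to count as a proof of the theorem you would need to carry out the inequality-preservation bookkeeping in full.
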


The Havel-Hakimi algorithm~\cite{havel1955remark,hakimi1962realizability} provides a constructive way to check in polynomial time if a sequence is a graphical degree sequence. The algorithm is essentially described by the following~\cite[Theorem 9.1]{BH90}:

\begin{theorem}
\label{T:gds}
The sequence $D= (d_1, \ldots, d_p)$ with $p-1 \geq d_1 \geq d_2 \geq \cdots \geq d_p$
for positive integers $d_i$ is a graphical degree sequence if and only if
\[
D' = (d_2-1, d_3-1, \ldots d_{d_1+1}-1, d_{d_1+2}, \ldots, d_p),
\]
when re-ordered in nonincreasing order, is a graphical degree sequence.
\qed\end{theorem}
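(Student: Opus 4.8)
The plan is to prove the two implications separately by the standard \emph{realize-and-operate} technique. Throughout, recall that being a graphical degree sequence is invariant under reordering, so I may treat $D'$ as an (unordered) multiset of target degrees carried by the vertex set $\{v_2,\dots,v_p\}$, where the intended degree of $v_i$ is $d_i-1$ for $2\le i\le d_1+1$ and $d_i$ for $i\ge d_1+2$.

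Sufficiency ($D'$ graphical $\Rightarrow$ $D$ graphical) is the routine half. Starting from any graph $H$ on $\{v_2,\dots,v_p\}$ that realizes $D'$, I would adjoin a fresh vertex $v_1$ and join it by an edge to each of $v_2,\dots,v_{d_1+1}$. This restores the degree of $v_i$ to $d_i$ for $2\le i\le d_1+1$, leaves the degrees of $v_{d_1+2},\dots,v_p$ untouched, and gives $v_1$ degree exactly $d_1$; hence the resulting graph realizes $D$. (Here one uses $d_1\le p-1$, so there are enough vertices available to attach to.)

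Necessity ($D$ graphical $\Rightarrow$ $D'$ graphical) is where the work lies, and the main obstacle is that an arbitrary realization of $D$ need not have its maximum-degree vertex adjacent to the ``right'' vertices. I would fix this with an extremal argument on $2$-switches. Among all graphs $G$ on $\{v_1,\dots,v_p\}$ with $\deg(v_i)=d_i$, choose one maximizing $\bigl|N(v_1)\cap\{v_2,\dots,v_{d_1+1}\}\bigr|$, and claim this maximum equals $d_1$, i.e.\ $v_1$ is adjacent precisely to $v_2,\dots,v_{d_1+1}$. If not, there is some $v_j$ with $2\le j\le d_1+1$ and $v_1\not\sim v_j$, and since $\deg(v_1)=d_1$ there is then a neighbour $v_k$ of $v_1$ with $k\ge d_1+2$; by the non-increasing ordering $d_j\ge d_k$. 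A short counting argument — comparing $N(v_j)\setminus\{v_1,v_k\}$ with $N(v_k)\setminus\{v_1,v_j\}$ and using $d_j\ge d_k$ together with $v_1\in N(v_k)$ and $v_1\notin N(v_j)$ — produces a vertex $w\notin\{v_1,v_j,v_k\}$ with $w\sim v_j$ and $w\not\sim v_k$. (The only delicate point is ruling out the degenerate configuration in which every candidate neighbour is ``used up''; the degree inequality closes this off once one separates the subcase $v_j\sim v_k$.)

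Now delete the edges $v_1v_k$ and $v_jw$ and add the edges $v_1v_j$ and $v_kw$. All four pairs are distinct, the first two are edges and the last two are non-edges, so this $2$-switch is legitimate; it preserves every vertex degree, and it strictly increases $\bigl|N(v_1)\cap\{v_2,\dots,v_{d_1+1}\}\bigr|$, contradicting maximality. Hence the chosen $G$ has $v_1$ adjacent exactly to $v_2,\dots,v_{d_1+1}$, and deleting $v_1$ from $G$ yields a graph on $\{v_2,\dots,v_p\}$ realizing precisely $D'$. I expect the $2$-switch surgery and the verification that the witness $w$ always exists to be the only nontrivial steps; everything else is bookkeeping.
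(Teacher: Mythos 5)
Your proof is correct and complete. Note, however, that the paper does not prove this statement at all: it is the classical Havel--Hakimi theorem, stated with an immediate \qed and attributed to Buckley and Harary \cite[Theorem 9.1]{BH90}, so there is no in-paper argument to compare against. What you give is the standard proof: the sufficiency direction by adjoining a new vertex of degree $d_1$ to a realization of $D'$ (correctly using $d_1\le p-1$), and the necessity direction by an extremal $2$-switch argument forcing $v_1$ to be adjacent to exactly $v_2,\dots,v_{d_1+1}$. The one step you leave as a sketch --- producing the witness $w\sim v_j$, $w\not\sim v_k$, $w\notin\{v_1,v_j,v_k\}$ --- does go through exactly as you indicate: if $v_j\not\sim v_k$ then $|N(v_j)\setminus\{v_1,v_k\}|=d_j$ while $|N(v_k)\setminus\{v_1,v_j\}|=d_k-1<d_j$, and if $v_j\sim v_k$ the two counts become $d_j-1$ and $d_k-2$, so in either case $d_j\ge d_k$ yields the required $w$; you correctly flag that the subcase split is the only delicate point. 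The switch then gains $v_j$ as a neighbour of $v_1$ without losing any neighbour in $\{v_2,\dots,v_{d_1+1}\}$ because $k\ge d_1+2$. So the argument is sound; the only caveat is cosmetic --- since the paper merely cites this result, your proof is supplementary rather than an alternative to anything in the text.
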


The Havel-Hakimi algorithm applies Theorem~\ref{T:gds} inductively: if the process stops at the empty sequence, the original sequence is a graphical degree sequence; if any entry becomes negative before this happens, it is not a graphical degree sequence.

As explained above, column 1 of the DCM of an undirected graph is always a graphical degree sequence. 
This fact is more general: all columns of a DCM are graphical degree sequences; more precisely, column $k$ of the DCM of an undirected graph $\GG$ is the degree sequence of the graph $\GG^k$ having the same vertices as $\GG$ and with an edge $i \leftrightarrow j$ iff $d(i,j) = k$. 

\subsection{Graphical in-degree sequences}

\emph{Mutatis mutandis}, we can define the \emph{in-degree sequence} of a directed graph $\GG$ as the list of in-degrees of nodes arranged in non-increasing order. Up to permutation of the nodes, this is column 1 of the DCM of $\GG$. We can ask whether the algorithm described in Theorem~\ref{T:gds} can be adapted to in-degree sequences. In fact, the result for directed graphs is much simpler:

\begin{theorem}
\label{T:gds-dir}
The sequence $D= (d_1, \ldots, d_p)$ with $p \geq d_1 \geq d_2 \geq \cdots \geq d_p$
for natural numbers $d_i$ is always the in-degree sequence of a directed graph.
\end{theorem}
\begin{proof}
    Just use $\NN=\{1,2,\dots,p\}$ and for every $i \in \NN$, add $d_i$ edges, whose head is node $i$ and whose tails are $d_i \leq p$ distinct but arbitrarily chosen elements of $\NN$.\qed
\end{proof}

A more sophisticated question is whethere a given sequence of pairs is the sequence of in- and out-degrees of a graph:
the Kleitman–Wang Algorithm~\cite{kleitman1973algorithms} extends the Havel-Hakimi Algorithm for this case.

\subsection{Some basic properties of (C)DCMs}

In this brief section we collect some observations about (C)DCMs. Let us start with some trivial ones.

\begin{proposition}
\label{P:DCM2}
In any graph $\GG$, for any node $i$ of $\GG$, and for any $0 \leq p \leq n-1$:

{\rm (a)} $M_0(i) = N_0(i) = \{i\}$.

{\rm (b)} $m_0(i) = n_0(i) = 1$.

{\rm (c)} $m_1(i) = \nu(i)+ 1$ and $n_1(i) = \nu(i)$ where $\nu(i)$ is the in-degree (number of predecessors) of $i$.

{\rm (d)} $M_p(i) \subseteq M_{p+1}(i)$.

{\rm (e)} $m_p(i) \leq m_{p+1}(i)$.

{\rm (f)} $n_p(i) = m_p(i) - m_{p-1}(i)$.

{\rm (g)} $m_p(i) = n_0(i)+n_1(i) + \cdots + n_p(i)$.

{\rm (h)} If $M$ is a DCM (resp. CDCM), so is any matrix obtained by permuting the rows of $M$.

{\rm (i)} $m_p(n-1) = n$ if and only if $d(j,i)$ is finite for all nodes $j$; in particular, this is always true if $\GG$ is strongly connected.
\qed\end{proposition}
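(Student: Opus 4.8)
The plan is to obtain all nine items directly from the definitions of $N_{\GG,k}(i)$ and $M_{\GG,k}(i)$, from the distance axioms (1)--(3), and from the single structural identity recorded right after the definition, namely that $M_k(i)=M_{k-1}(i)\sqcup N_k(i)$ is a \emph{disjoint} union (with the convention $M_{-1}(i)=\emptyset$). Nothing here is deep; the only ingredient beyond pure unpacking of notation is the observation that in an $n$-node graph every finite distance is at most $n-1$, which is needed only for item (i).

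First I would settle (a)--(c), which rely on the distance axioms. For (a): $N_0(i)=\{y\mid d(y,i)=0\}=\{i\}$ by axiom (1) (which survives in the directed case, since only symmetry is lost), and $M_0(i)=\{y\mid d(y,i)\le 0\}$ equals the same set because distances are non-negative; (b) is the cardinality of (a). For (c): $d(y,i)=1$ holds iff $y\ra i$ and $y\ne i$, so $N_1(i)$ is exactly the set of predecessors of $i$ and $n_1(i)=\nu(i)$; since $M_1(i)=M_0(i)\sqcup N_1(i)$ with $M_0(i)=\{i\}$ disjoint from that set, $m_1(i)=\nu(i)+1$.

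Next, (d)--(g) are set-theoretic bookkeeping. Item (d) is just monotonicity of the predicate ``$d(y,i)\le p$'' in $p$, and (e) follows by taking cardinalities. For (f), the disjoint union $M_p(i)=M_{p-1}(i)\sqcup N_p(i)$ gives $m_p(i)=m_{p-1}(i)+n_p(i)$, i.e.\ $n_p(i)=m_p(i)-m_{p-1}(i)$. Item (g) is then the telescoping sum $m_p(i)=\sum_{k=0}^{p}\bigl(m_k(i)-m_{k-1}(i)\bigr)=\sum_{k=0}^{p}n_k(i)$, using $m_{-1}(i)=0$.

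Finally, (h) and (i). For (h): applying a permutation $\sigma$ of $\NN$ to the rows of $N_\GG$ (resp.\ $M_\GG$) produces the (C)DCM of the graph $\GG'$ obtained from $\GG$ by renaming each node $j$ to $\sigma(j)$; a renaming is a graph isomorphism, hence preserves every distance, so $n_{\GG',k}(\sigma(i))=n_{\GG,k}(i)$, which is precisely the permuted matrix. For (i): if $d(j,i)$ is finite for every $j$, a shortest $j$-to-$i$ path is simple and so uses at most $n-1$ arcs, whence $d(j,i)\le n-1$; therefore $M_{n-1}(i)=\NN$ and $m_{n-1}(i)=n$. Conversely $m_{n-1}(i)=n$ forces $M_{n-1}(i)=\NN$, so every $d(j,i)$ is finite; strong connectivity is precisely the case in which this holds for all $i$ at once. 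The only ``obstacle'', such as it is, is hygiene: keeping track of which distance axioms remain valid in the directed setting, using the boundary convention $M_{-1}(i)=\emptyset$ consistently in (f)--(g), and invoking simplicity of shortest paths for the $n-1$ bound in (i). There is no genuine difficulty — the proposition is a catalogue of immediate consequences of the definitions.
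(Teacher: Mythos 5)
Your proof is correct and is exactly the argument the paper intends: the proposition is stated with no proof at all (the authors regard every item as an immediate consequence of the definitions, the disjoint decomposition $M_p(i)=M_{p-1}(i)\,\dot\cup\,N_p(i)$, and the fact that shortest paths are simple), and your verifications supply precisely those routine steps. The only point worth a footnote is that in (c) your identification of $N_1(i)$ with the predecessor set tacitly assumes no self-loop at $i$ (a loop would make $i$ a predecessor with $d(i,i)=0$), but this caveat is already implicit in the paper's own statement, so it is not a gap relative to the intended claim.
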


Observe that the (C)DCM of a graph is uniquely defined only up to row permutation, but by (h) we can use lexicographic order to define a unique (C)DCM of a graph, that we shall call the \emph{canonical} (C)DCM of $\GG$.

We can extend the function $M_p(-)$ to subsets: for $\XX \subseteq \NN$, define the {\em $p$-neighbourhood} of $\XX$ to be
\[
M_p(\XX) = \bigcup_{i \in \XX} M_p(i)
\]
Clearly
\begin{equation}
\label{E:MpMq}
M_p(M_q(\XX)) = M_{p+q}(\XX).
\end{equation}

\begin{proposition}
    For any graph $\GG$ and node $i$:
    \label{P:DCM3}
    
    {\rm(a)} If $m_{p+1}(i) = m_p(i)$ then $m_{p+r}(i) = m_p(i)$ for $0 \leq r \leq n-p$.
    
    {\rm(b)} If $\GG$ is strongly connected and $m_{p+1}(i) = m_p(i)$ then $m_p(i) = n$ and $m_{p+r}(i) = n$
    for all $0 \leq r \leq n-p-1$.
\end{proposition}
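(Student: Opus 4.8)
The plan is to reduce both parts to the ``semigroup'' identity \eqref{E:MpMq}, $M_p(M_q(\XX)) = M_{p+q}(\XX)$, which is the only place the triangle inequality really enters.

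For part (a), I would first upgrade the numerical hypothesis $m_{p+1}(i)=m_p(i)$ to the set equality $M_{p+1}(i)=M_p(i)$: by Proposition~\ref{P:DCM2}(d) we already have the inclusion $M_p(i)\subseteq M_{p+1}(i)$, and nested finite sets of equal cardinality coincide. Write $S:=M_p(i)=M_p(\{i\})$. Then \eqref{E:MpMq} (with $p\mapsto 1$, $q\mapsto p$, $\XX\mapsto\{i\}$) gives $M_1(S)=M_1(M_p(\{i\}))=M_{p+1}(\{i\})=S$, i.e.\ $S$ is a fixed point of the operator $M_1(-)$. A one-line induction on $r$ then shows $M_r(S)=S$ for every $r\ge 0$: the base case is $M_0(S)=S$ (from Proposition~\ref{P:DCM2}(a)), and the step is $M_{r+1}(S)=M_1(M_r(S))=M_1(S)=S$, again using \eqref{E:MpMq}. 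Applying \eqref{E:MpMq} once more, $M_{p+r}(i)=M_r(M_p(\{i\}))=M_r(S)=S=M_p(i)$, so $m_{p+r}(i)=m_p(i)$; this holds for all $r\ge 0$, and in particular on the stated range $0\le r\le n-p$.

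For part (b), strong connectivity forces $d(j,i)<\infty$ for every $j$, and since a shortest path visits at most $n$ vertices its length is at most $n-1$; hence $M_{n-1}(i)=\NN$ and $m_{n-1}(i)=n$, which is essentially Proposition~\ref{P:DCM2}(i). Assuming $p\le n-1$ (so that $m_p(i)$ is a genuine matrix entry), take $r=n-1-p$ in part (a): this yields $m_p(i)=m_{n-1}(i)=n$. Feeding this back into part (a) gives $m_{p+r}(i)=m_p(i)=n$ for all $0\le r\le n-p-1$, as claimed.

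I do not expect a real obstacle here; the only thing to get right is the bookkeeping around \eqref{E:MpMq}. The key organizational point is to iterate the ``closure'' step $M_{r+1}=M_1\circ M_r$ rather than iterate $M_p$ directly: once one knows $M_1(S)=S$, all intermediate counts $m_{p+1}(i),m_{p+2}(i),\dots$ are controlled, whereas iterating $M_p$ alone would only pin down indices of the form $p,2p,3p,\dots$. A secondary point worth stating is that $m_q(i)$ only makes sense for $0\le q\le n-1$, so the ranges $r\le n-p$ in (a) and $r\le n-p-1$ in (b) are precisely what keeps every index inside the matrix.
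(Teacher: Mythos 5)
Your proposal is correct and follows essentially the same route as the paper: both upgrade the cardinality equality to the set equality $M_{p+1}(i)=M_p(i)$ via Proposition~\ref{P:DCM2}(d), then induct on $r$ using the identity \eqref{E:MpMq} in the form $M_{p+r+1}(i)=M_1(M_{p+r}(i))$, and derive (b) from (a) together with $m_{n-1}(i)=n$ for strongly connected graphs. Your reformulation of the induction as ``$S$ is a fixed point of $M_1$'' is only a cosmetic repackaging of the paper's induction step.
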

\begin{proof}
    (a) If $m_p(i+1) = m_p(i)$ then case~(d) of Proposition~\ref{P:DCM2} implies that
    $M_{p+1}(i) = M_p(i)$. We use induction on $r$. The result holds for $r=1$.
    for the induction step:
    \[
    M_{p+(r+1)}(i) = M_1(M_{p+r}(i)) =  M_1(M_p(i)) = M_{p+1}(i)=M_p(i).
    \]
    
    (b) This follows from (a), since $m_{n-1}(i) = n$ for any $i$
    by connectedness of $\GG$.
\end{proof}
    
\begin{corollary}
    \label{C:increasing}
    For any graph $\GG$ and node $i$ the sequence $[m_p(i)]$ for $0 \leq p \leq n-1$ (which is the $i$-th row of the CDCM)
    is monotonic strictly increasing until it reaches a certain value $k$, at which point all subsequent entries equal $k$. The value $k$ is exactly the number of nodes $j$ such that $d(j,i)$ is finite. If $\GG$ is strongly connected, $k=n$.
\end{corollary}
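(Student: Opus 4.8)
The plan is to read the statement off the monotonicity facts already established, together with one elementary observation about shortest paths. First I would record that, by Proposition~\ref{P:DCM2}(e), the sequence $[m_p(i)]$ for $0 \le p \le n-1$ is non-decreasing, and that it starts at $m_0(i)=1$ by Proposition~\ref{P:DCM2}(b). Next I would set $k = |\{j \in \NN \mid d(j,i) < \infty\}|$ and pin down that this is the limiting value of the sequence in two moves: on the one hand $M_p(i) \subseteq \{j \mid d(j,i)<\infty\}$ for every $p$, so $m_p(i) \le k$ throughout; on the other hand, if $d(j,i)<\infty$ then a shortest path from $j$ to $i$ may be taken simple and hence has length at most $n-1$, so $d(j,i)\le n-1$, giving $M_{n-1}(i) = \{j \mid d(j,i)<\infty\}$ and therefore $m_{n-1}(i)=k$. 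In particular $k$ is actually attained; let $p_0$ be the least index with $m_{p_0}(i)=k$.

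The second part of the argument establishes the exact ``strictly increasing then constant'' shape. For $q \ge p_0$ we have $k = m_{p_0}(i) \le m_q(i) \le k$, so $m_q(i)=k$: the sequence is constant, equal to $k$, from $p_0$ onward. For $p < p_0$ the value $m_p(i)$ is $\le k$ but not equal to $k$ by minimality of $p_0$, hence $m_p(i)<k$; moreover $m_{p+1}(i)\ne m_p(i)$, because otherwise Proposition~\ref{P:DCM3}(a) would force $m_{p_0}(i)=m_p(i)<k$, contradicting $m_{p_0}(i)=k$. Combining $m_{p+1}(i)\ne m_p(i)$ with monotonicity gives $m_{p+1}(i)>m_p(i)$, so the sequence is strictly increasing on $0,1,\dots,p_0$. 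This is exactly the asserted behaviour, with threshold value $k$ equal to the number of nodes $j$ with $d(j,i)$ finite. Finally, if $\GG$ is strongly connected then every $j$ satisfies $d(j,i)<\infty$, so $k=n$ (equivalently, this is Proposition~\ref{P:DCM2}(i)).

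I do not expect a genuine obstacle here; the corollary is essentially bookkeeping over Propositions~\ref{P:DCM2} and~\ref{P:DCM3}. The one point that does need care is that Proposition~\ref{P:DCM3}(a) on its own only guarantees that the sequence is \emph{frozen up to index $n-1$} once two consecutive entries agree; it does not, by itself, identify the plateau value with $k$. That identification requires knowing independently that $m_{n-1}(i)=k$, which is precisely what the simple-path length bound supplies, and this is why that bound is the first thing to nail down.
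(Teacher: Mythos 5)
Your argument is correct and uses exactly the ingredients the paper intends: the paper states Corollary~\ref{C:increasing} without an explicit proof, leaving it as immediate bookkeeping over Proposition~\ref{P:DCM2} (monotonicity, $m_0(i)=1$) and Proposition~\ref{P:DCM3}(a) (once two consecutive entries agree the sequence freezes), which is precisely what you supply. Your added observation that the plateau value must be identified with $k$ via the simple-path bound $d(j,i)\le n-1$ is a genuine and correctly handled detail that the paper glosses over.
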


    \begin{definition}[Good sequence]
    \label{D:good}
    For a given $n$, a sequence $(a_0, a_1, \ldots, a_{n-1})$ of positive integers
    is {\em good} if $a_0 =1$ and there exists $j$ and $k\leq n$ such that
    $a_0 < a_1 < \cdots < a_j = k$ 
    and $a_r=k$ for all $r \geq j$. If further $k=n$, we say that the sequence is \emph{very good}.
    \end{definition}
    
    Corollary \ref{C:increasing} states that every row of a CDCM is good. If furthermore the graph is strongly connected, its rows are very good.
    We observe:
    
    \begin{theorem}
    \label{T:all_good}
    Every good sequence occurs in the CDCM of some graph (in particular, some undirected graph).
    \end{theorem}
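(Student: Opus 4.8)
The plan is to reduce the theorem to an explicit construction. Let $(a_0,\dots,a_{n-1})$ be a good sequence and let $j$, $k=a_j\le n$ be as in Definition~\ref{D:good}. Put $n_p := a_p - a_{p-1}$ for $1\le p\le j$ (with the convention $a_{-1}:=0$, so that also $n_0 = a_0 = 1$); the strict-increase clause of Definition~\ref{D:good} forces $n_p\ge 1$ for every $1\le p\le j$, and $\sum_{p=0}^{j} n_p = a_j = k$. I will build an undirected graph $\GG$ on $n$ nodes together with a distinguished node $v_0$ such that $n_{\GG,p}(v_0)=n_p$ for all $p$. Proposition~\ref{P:DCM2}(g) then gives $m_{\GG,p}(v_0)=\sum_{q=0}^{p}n_{\GG,q}(v_0)$, which equals $a_p$ for $p\le j$ (a telescoping sum) and equals $k=a_j=a_p$ for $p\ge j$; hence the row of the CDCM of $\GG$ indexed by $v_0$ is exactly $(a_0,\dots,a_{n-1})$, as required. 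This also yields the parenthetical claim, since an undirected graph is in particular a graph.

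For the construction I would take a rooted tree $T$ built layer by layer: the root $v_0$ receives $n_1$ children; pick one of them and call it $w_1$, giving $w_1$ exactly $n_2$ children; pick one of those and call it $w_2$, giving it $n_3$ children; and so on, so that for each $1\le p\le j-1$ there is a vertex $w_p$ at depth $p$ to which $n_{p+1}$ fresh children are attached. Because $n_p\ge 1$ for all $p\le j$, every choice of $w_p$ is available, the process stops once the layer at depth $j$ has been created, and by construction $T$ has exactly $n_p$ vertices at depth $p$ for $0\le p\le j$ and none at greater depth, so $T$ has $k$ vertices. Finally let $\GG$ be $T$ together with $n-k\ge 0$ extra isolated nodes (relabelling all nodes as $[n]$ so that $v_0=0$, say); then $\GG$ is undirected and has $n$ nodes. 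Note that when $k=n$ no isolated node is added and $\GG$ is connected, consistently with Corollary~\ref{C:increasing} (very good sequences come from connected graphs).

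It then remains to read off $n_{\GG,p}(v_0)$. In a tree the distance between two vertices is the length of the unique simple path joining them, so $d_\GG(w,v_0)$ equals the depth of $w$ for $w\in T$, while $d_\GG(y,v_0)=\infty$ for each added isolated node $y$. Therefore $N_{\GG,p}(v_0)$ is precisely the set of depth-$p$ vertices of $T$, so $n_{\GG,p}(v_0)=n_p$ for $0\le p\le j$ and $n_{\GG,p}(v_0)=0$ for $p>j$, which is exactly what the reduction in the first paragraph needs, and the proof is complete. The construction is elementary, so I do not expect a genuine obstacle; the only points requiring a little care are the off-by-one bookkeeping between ``distance $p$ from $v_0$'' and ``depth $p$ in $T$'', and the verification that $n_p\ge 1$ for $1\le p\le j$ so that every $w_p$ exists — which is precisely where the strictly-increasing hypothesis in the definition of a good sequence is used.
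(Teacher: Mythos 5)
Your proposal is correct and is essentially the paper's own argument: both take successive differences of the good sequence and realize them as the layer sizes of a rooted tree built by attaching each new layer's children to a single vertex of the previous layer, then read off the root's row of the CDCM. The only (welcome) refinement is that you explicitly pad with $n-k$ isolated vertices when $k<n$, a detail the paper's proof glosses over.
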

    
    \begin{proof}
    Let the good sequence be $(a_0, a_1, \ldots, a_{n-1})$, and define $b_i = a_i-a_{i-1} \geq 0$ for all $i>0$, and $b_0=1$.
    Define $\GG$ to be a tree with root node $0$. Node $0$ is connected to nodes $1, \ldots, b_1$. One of those nodes connects to nodes $b_1+1, \ldots, b_1+b_2$, and so on inductively. Then the $0$-th row of $N_\GG$ is $(b_0, b_1, \ldots b_{n-1})$, so the $0$-th row of $M_\GG$ is $(a_0, a_1, \ldots a_{n-1})$. See Figure \ref{F:tree_ex}.\qed
    \end{proof}
    
    \begin{figure}[htb]
    \centerline{%
    \includegraphics[width=.3\textwidth]{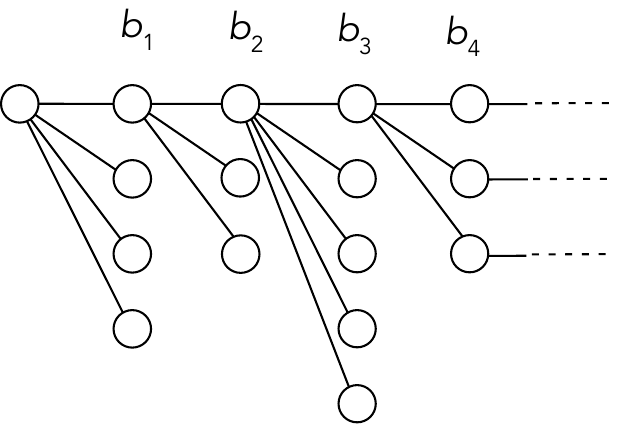}
    }
    \caption{Construction for a graph with a given good row.}
    \label{F:tree_ex}
    \end{figure}
 
\medskip
We end this section with some necessary conditions for a matrix to be a CDCM, in terms of inequalities. We already know:

\begin{theorem}
\label{T:necc_cond1}
For the CDCM of a strongly connected graph:

{\rm (a)} The column $0$ entry in any row is $1$.

{\rm (b)} The column $1$ entry in row $i$ is $m_1(i)=1+\nu(i)$. 
That is, the CDCM (hence also DCM) tells us the in-degree of node $i$.

{\rm (c)} Each row is very good.
\qed\end{theorem}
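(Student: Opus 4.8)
The plan is to observe that all three statements are immediate consequences of facts already established earlier in the section, so that no fresh argument is needed; the theorem simply packages them together for use in the sequel. I would structure the proof as three one-line deductions, plus a sentence of bookkeeping to confirm that the indexing conventions match.

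For part~(a), I would cite Proposition~\ref{P:DCM2}(b), which gives $m_0(i) = n_0(i) = 1$ for every node $i$ of \emph{any} graph; since the column~$0$ entry of row~$i$ of the CDCM is by definition $m_0(i)$, this is exactly the claim. For part~(b), I would apply Proposition~\ref{P:DCM2}(c): $m_1(i) = \nu(i)+1$, where $\nu(i)$ is the in-degree of $i$. As the column~$1$ entry of row~$i$ is by definition $m_1(i)$, this proves the stated formula; and since the DCM and CDCM determine each other entry-by-entry via parts (f)--(g) of Proposition~\ref{P:DCM2}, the DCM likewise encodes $\nu(i)$.

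For part~(c), I would invoke Corollary~\ref{C:increasing}: the $i$-th row of the CDCM is the sequence $[m_p(i)]$ for $0 \le p \le n-1$, which is strictly increasing until it reaches a value $k$ and then constant, with $k$ equal to the number of nodes $j$ for which $d(j,i)$ is finite. When $\GG$ is strongly connected, Proposition~\ref{P:DCM2}(i) (or directly the definition of strong connectivity) forces $d(j,i)<\infty$ for all $j$, hence $k=n$; by Definition~\ref{D:good} the row is then very good.

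There is no substantive obstacle here — the only care required is to verify the trivial translation between the phrase "column $c$ entry of row $i$ of the CDCM" and the quantity $m_c(i)$, which is immediate from the definition of the CDCM, and to note that strong connectivity is exactly what upgrades "good" to "very good" in part~(c). Accordingly the proof is a short citation of Propositions~\ref{P:DCM2} and the material around Corollary~\ref{C:increasing}, and the statement can be closed with \qed as in the excerpt.
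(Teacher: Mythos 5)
Your proposal is correct and matches the paper's own treatment: the paper states this theorem with an immediate \qed precisely because, as you observe, parts (a) and (b) are Proposition~\ref{P:DCM2}(b)--(c) read off against the definition of the CDCM, and part (c) is Corollary~\ref{C:increasing} combined with strong connectivity forcing $k=n$ in Definition~\ref{D:good}. Nothing further is needed.
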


We now prove two further inequalities.
\begin{theorem}
\label{T:necc_cond2}
Let $M = [m_j(i)]_{i,j}$ be the CDCM of a graph $\GG$. 
Then for any $i \in \NN$, with $\nu = \nu(i) = m_1(i)-1$,
there exist distinct $j_1, \ldots, j_\nu \in \NN \setminus \{i\}$
such that:
\begin{equation}
\label{E:necc_ineq_1}
m_p(i) \geq \max_{k=1}^\nu m_{p-1}(j_k)
\end{equation}
and
\begin{equation}
\label{E:necc_ineq_2}
m_p(i) \leq \sum_{k=1}^\nu m_{p-1}(j_k)
\end{equation}
for all $p=2,\dots,n-1$. 
\end{theorem}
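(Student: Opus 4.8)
The plan is to identify the set $\{j_1,\dots,j_\nu\}$ with the set of predecessors of $i$, i.e.\ the $\nu=\nu(i)$ nodes $j$ with $j\to i$, and then to derive the two inequalities from a simple analysis of which nodes can reach $i$ in at most $p$ steps. The key observation is that a shortest path of length at most $p$ from a node $y\neq i$ to $i$ must pass through exactly one predecessor of $i$ as its penultimate node; conversely, if $d(y,j_k)\le p-1$ for some predecessor $j_k$, then $d(y,i)\le p$. This is nothing but the identity $M_p(i)\setminus\{i\} \subseteq \bigcup_{k=1}^\nu M_{p-1}(j_k)$ together with $M_{p-1}(j_k)\subseteq M_p(i)$ for each predecessor $j_k$; see equation~\eqref{E:MpMq} and case~(d) of Proposition~\ref{P:DCM2}.

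First I would set up notation: let $j_1,\dots,j_\nu$ enumerate the predecessors of $i$, which are distinct and all lie in $\NN\setminus\{i\}$ unless $i$ has a self-loop --- so I would first observe that a self-loop at $i$ is irrelevant to distances and may be deleted, ensuring $j_k\neq i$. For the lower bound~\eqref{E:necc_ineq_1}: fix any predecessor $j_k$; since $j_k\to i$, any node at distance $\le p-1$ from $j_k$ is at distance $\le p$ from $i$, so $M_{p-1}(j_k)\subseteq M_p(i)$, giving $m_{p-1}(j_k)\le m_p(i)$; taking the maximum over $k$ yields~\eqref{E:necc_ineq_1}. For the upper bound~\eqref{E:necc_ineq_2}: take any $y\in M_p(i)$. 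If $y=i$ then $y\in M_{p-1}(j_1)$ when $p\ge 2$ because $d(j_1,i)=1\le p-1$... more carefully, $i$ itself is counted, so I instead argue $M_p(i)\subseteq \bigcup_{k=1}^\nu M_{p-1}(j_k)$ directly: if $y=i$, then since $p\ge2$ we have $d(j_1,j_1)=0\le p-1$, hmm, that gives $j_1\in M_{p-1}(j_1)$, not $i$. The clean statement is: every $y$ with $d(y,i)\le p$ and $y\ne i$ has a shortest path whose penultimate vertex is some predecessor $j_k$ with $d(y,j_k)\le p-1$, so $y\in M_{p-1}(j_k)$; and $i$ itself satisfies $i\in M_{p-1}(j_k)$ iff $d(j_k,i)\le p-1$, which holds since $d(j_k,i)=1\le p-1$. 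Hence $M_p(i)\subseteq\bigcup_{k=1}^\nu M_{p-1}(j_k)$, and therefore $m_p(i)=|M_p(i)|\le\sum_{k=1}^\nu|M_{p-1}(j_k)|=\sum_{k=1}^\nu m_{p-1}(j_k)$ by subadditivity of cardinality over a union.

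The main (minor) obstacle is bookkeeping around the vertex $i$ itself and the corner case $\nu=0$: if $i$ has no predecessors, then $M_p(i)=\{i\}$ for all $p$, the empty max and empty sum conventions make~\eqref{E:necc_ineq_1}--\eqref{E:necc_ineq_2} read $m_p(i)\ge-\infty$ and $m_p(i)\le 0$, the latter being false --- so I would note that for a strongly connected graph with $n\ge 2$ every node has a predecessor, hence $\nu\ge1$, and restrict the statement accordingly (or simply observe $m_p(i)=1$ forces the claim vacuously in degenerate single-node cases). With that dispatched, both inequalities follow from the inclusions above, which are immediate from the definition of shortest-path distance; there is no genuinely hard step, only the care needed to handle the endpoint $i$ correctly in the union bound.
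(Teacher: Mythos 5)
Your proof is correct and follows essentially the same route as the paper's: you take $j_1,\dots,j_\nu$ to be the predecessors of $i$, derive the lower bound from the inclusion $M_{p-1}(j_k)\subseteq M_p(i)$, and derive the upper bound from $M_p(i)\subseteq\bigcup_{k=1}^{\nu}M_{p-1}(j_k)$ together with subadditivity of cardinality. You are in fact a bit more careful than the paper, which writes $M_p(i)=M_{p-1}(M_1(i))=\bigcup_{j\to i}M_{p-1}(j)$ and silently absorbs the $M_{p-1}(i)$ term coming from $i\in M_1(i)$, and which does not remark on the degenerate case $\nu=0$ where the stated upper bound genuinely fails; both points are handled correctly in your write-up.
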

\begin{proof}
Let $j_1,\dots,j_\nu$ be the predecessors of $i$ in $\GG$. To prove \eqref{E:necc_ineq_1}, observe that since 
$j_k\to i$, we have $M_{p-1}(j_k) \subseteq M_p(i)$.
Therefore $m_p(i) \geq m_{p-1}(j_k)$. This holds for all
$k$ such that $1 \leq k \leq \nu$, and this proves \eqref{E:necc_ineq_1}.

To prove \eqref{E:necc_ineq_2}, observe that
by \eqref{E:MpMq},
\[
M_{p}(i) = M_{p-1}(M_1(i))=\bigcup_{j \to i} M_{p-1}(j).
\]
Therefore 
\[
m_p(i) \leq \sum_{j\to i} m_{p-1}(j_k),
\]
which is \eqref{E:necc_ineq_2}.\qed
\end{proof}

That is, the vector of entries of row $i$ after the first is bounded below, termwise, by the maximum of $\nu$ other distinct rows shifted one place to the right.
It is bounded above, termwise, by the sum of the same $\nu$ distinct rows, again shifted one place to the right.
Of course, we do not know which $\nu$ rows to choose, but one
such set of rows must exist. If no such set exists, the matrix
under consideration cannot be a CDCM.

\section{NP-Completeness of (C)DCM recognizability}

In this section, we show that the problem of deciding whether a matrix is a (C)DCM is (strongly) NP-complete.
We use a reduction from a special instance of the \emph{three-partition problem} (TPP)~\cite{GaJCI}, which is a well-known strongly NP-complete problem. 

As already discussed, Erd\H{o}s and Mikl\'os proved in~\cite{ErMNASLDSPE} (in our terminology) that deciding whether two sequences of integers are the columns of indices 1 and 2 of the DCM of a simple undirected graph $\GG$ (the so-called second-order ds problem) is also strongly NP-complete. 
Their proof uses the \emph{basket filling problem}; more precisely, they prove that the basket filling problem is reducible to the second-order ds problem, while the \emph{three-partition problem} is reducible to the second-order ds problem for bipartite graphs. They also
show that the three-partition problem can be reduced in polynomial time to the
basket filling problem. 

As we already mentioned, their reductions do not immediately apply to the (C)DCM recognizability problem, though. A reduction from the second-order ds problem to the (C)DCM recognizability problem would require that we are able, from a given second-order ds instance, to construct a matrix that is the (C)DCM of some graph. It is interesting to observe that our construction also uses the three-partition problem, but in a simpler way than in Erd\H{o}s and Mikl\'os's reduction.

\subsection{The three-partition problem}

Informally, the three-partition problem (shortened hereafter as TPP) consists in deciding if it is possible to partition a list of natural numbers into groups of exactly three integers each, all having the same sum.

This problem is very hard: it is strongly NP-complete (i.e., NP-complete even when the input sequence is provided in unary), and it remains so in many restricted cases. We consider a very mild variant, where we assume that all integers are distinct and separated from one another by a certain threshold.
 
\begin{definition}[TPP instance]
    A \emph{TPP instance} is a sequence of natural numbers $\langle a_1, \dots, a_{3m}\rangle$ such that $a_1\geq a_2\geq \dots \geq a_{3m}$ and $t/4 < a_i < t/2$ for all $i=1,\dots,3m$,
    where $t=(\sum_{i=1}^{3m} a_i)/m$.
\end{definition}

Formally, the three-partition problem is:
\begin{problem}[Three-partition problem (TPP)]
    \label{prob:TPP}
    Given a TPP instance $\langle a_1, \dots, a_{3m}\rangle$, decide whether there exist subsets $D_1,\dots, D_m \subseteq \{1,\dots,3m\}$ such that:
        
        -- the $D_j$'s are pairwise disjoint;
        
        -- $|D_j|=3$ for all $j=1,\dots,m$;
        
        -- $\sum_{i \in D_j} a_i = t$ for all $j=1,\dots,m$,
    
   \noindent where $t=(\sum_{i=1}^{3m} a_i)/m$. If this partition exists, the instance is \emph{positive}, otherwise it is \emph{negative}.
\end{problem}


\begin{theorem}
    \label{thm:tppnpc}
    Problem~\ref{prob:TPP} (TPP) is strongly NP-complete (i.e., it is NP-complete even if the input sequence is assumed to be encoded in unary); it remains so even under the following further assumptions:

            -- all $a_i$'s are distinct,
            
            -- $t \geq 4$, 
            
            -- for a fixed positive $K$, $a_{3m} \geq K$ and $a_i-a_{i+1} \geq K$ for all $i=1,\dots,3m-1$.
 
\end{theorem}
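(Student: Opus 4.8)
The plan is to reduce the \emph{standard} three-partition problem to the restricted variant in the statement. Recall that a standard instance is a list $\langle b_1\ge\cdots\ge b_{3m}\rangle$ of positive integers with $\sum_i b_i=mB$ and $B/4<b_i<B/2$; Garey and Johnson~\cite{GaJCI} prove this problem strongly NP-complete, and the hypothesis $B/4<b_i<B/2$ already forces every part of a valid partition to have size exactly three, which matches Problem~\ref{prob:TPP}. Membership in NP (in fact in the strong sense, since all numbers in the instances we build will be polynomially bounded) is immediate: a solution is the list $D_1,\dots,D_m$, of polynomial size, and disjointness, cardinalities, and the equal-sum condition are all checkable in polynomial time.

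The construction has two layers. The first layer disposes of the three scale-invariant requirements ``$t\ge 4$'', ``$a_{3m}\ge K$'', and ``$a_i-a_{i+1}\ge K$'' simultaneously: multiply every entry of the instance by a single fixed constant $N$ with $N\ge K$ and $N$ large enough that $NB\ge 4$. Scaling by $N$ multiplies the target by $N$, leaves the sorted order intact, and preserves the homogeneous inequalities $t/4<a_i<t/2$ verbatim, so the result is again a TPP instance; moreover a partition of the scaled list into equal-sum triples is, after dividing through by $N$, exactly a partition of the original list, so this layer is a genuine equivalence. Since $N$ is a constant, the unary encoding length grows only polynomially, so strong NP-completeness is preserved. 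Crucially, \emph{if} the list fed into this layer has pairwise \emph{distinct} entries, then, being sorted integers, consecutive entries already differ by at least $1$, hence after scaling they differ by at least $N\ge K$ and the smallest entry is at least $N\ge K$; all clauses of the statement then hold at once.

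Hence the whole difficulty is concentrated in the second layer: starting from standard three-partition, produce an equivalent instance whose $3m$ integers are pairwise distinct. This is the step I expect to be the main obstacle. The tempting shortcut of adding an index-dependent offset, say $a_i=Nb_i+i$, does \emph{not} work: a valid partition of the $b_i$ need not split $\{1,\dots,3m\}$ into blocks of equal index-sum (a small explicit example suffices to show this), so this transformation can send a positive instance to a negative one. The honest route is to use the fact — which is folklore, and in any case provable — that three-partition remains strongly NP-complete when the $3m$ input integers are required to be pairwise distinct, and then invoke the first layer. Establishing that distinct-entry version (by citing it, or by re-running the classical reduction from three-dimensional matching, or from numerical three-dimensional matching, while arranging the encoded integers to be pairwise distinct and keeping every number polynomially bounded so that ``strongly'' survives) is where the real care is needed. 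Once distinctness is secured, composing it with the scaling of the first layer yields a list satisfying every hypothesis of the theorem, completing the reduction.
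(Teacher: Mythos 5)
Your proposal matches the paper's proof in all essentials: both reduce from the distinct-entries variant of three-partition (which the paper, like you, does not prove but cites from the literature, noting that the bounds $t/4<a_i<t/2$ can be restored by adding $2t$ to each entry) and then obtain the separation condition by multiplying every entry by the constant $K$, using exactly your observation that distinct sorted integers already differ by at least $1$. The only cosmetic difference is how $t\geq 4$ is secured: you fold it into the scaling, while the paper observes that instances with $t\leq 3$ are trivially negative (all entries would have to be $0$ or $1$, contradicting distinctness); both work.
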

\begin{proof}
    The problem is strongly NP-complete as shown in~\cite{GaJCI}. It remains so even if the input integers are all distinct, as proved in~\cite{hulett2008multigraph}; the fact that the latter result holds also in the presence of the bounds $t/4<a_i<t/2$ is easy (by adding $2t$ to each entry in the sequence).
    
    If $t=(\sum_{i=1}^{3m} a_i)/m\leq 3$, then all $a_i$'s must be zero or one (because $a_i<t/2 \leq 3/2$), so the instance is negative because it contains only one or two elements (recall that the elements in the sequence are all distinct). 
    
    Hence, the problem remains strongly NP-complete for distinct integers with $t \geq 4$. Note that in this case $a_i>t/4\geq 1$, so all $a_i$ are strictly positive. Multiplying all the entries by $K$ we can assume that $a_{3m} \geq K$ and $a_{i}-a_{i+1} \geq K$ for all $i=1,\dots,3m-1$ (multiplying all elements by a constant does not change the problem, because the sequence obtained after the multiplication is a positive instance of the problem if and only if the original sequence was a positive instance itself).\qed
\end{proof}

\subsection{NP-Completeness of (C)DCM}

Given a TPP instance $\vc a = \langle a_1, \dots, a_{3m}\rangle$, let
\[
    s = \sum_{i=1}^{3m} a_i\qquad
    t = s/m\qquad
    n = 4m + s = (t+4)m.
\]
We define a matrix $M(\vc a) \in \RR^{n \times n}$ as follows:

    -- the first $3m$ rows of $M(\vc a)$ have the form
        $[1, a_i, 1, t-a_i, 2, 0, 0, \dots]$
    for $i=1,\dots,3m$,
    
    -- the following $s$ rows of $M(\vc a)$ have the form
        $[1, 2, t-1, 2, 0, 0, 0, \dots]$,
    
     -- the last $m$ rows of $M(\vc a)$ have the form
        $[1, t, 3, 0, 0, 0, \dots]$.

Intuitively, if $\vc a$ is a positive instance of TPP, it is possible to show that $M(\vc a)$ is the DCM of an undirected graph $\GG(\vc a)$. This fact will be part of the proof of Theorem~\ref{thm:prenpc}, but we want to provide immediately a visual clue to this fact for $\vc a=\langle 9, 7, 6, 5, 2, 1\rangle$. (To keep the graph small, we use a sequence that fails to satisfy the more restrictive conditions of Theorem~\ref{thm:tppnpc}.) The graph (see Figure~\ref{fig:ga}) contains $m$ (in this case, 2) connected components, one for each $D_j$. All components contain three vertices (representing the three integers of the input sequence belonging to $D_j$)
connected to $t$ vertices (this is possible because the sum of the three integers is always $t$), which are further connected to a single vertex (of degree $t$).

\begin{figure}[htbp]
    \centering
        \begin{tikzpicture}[main/.style = {draw, circle}, node distance=0.4cm, scale=.4]
        \node[main] (x1) at (0, 0) {};
        \node[main] (x5) at (0, -2.5) {};
        \node[main] (x9) at (0, -9.5) {};
        \node[main] (u1) at (3, 0) {};
        \node[main] (u2) at (3, -1) {};
        \node[main] (u3) at (3, -2) {};
        \node[main] (u4) at (3, -3) {};
        \node[main] (u5) at (3, -4) {};
        \node[main] (u6) at (3, -5) {};
        \node[main] (u7) at (3, -6) {};
        \node[main] (u8) at (3, -7) {};
        \node[main] (u9) at (3, -8) {};
        \node[main] (u10) at (3, -9) {};
        \node[main] (u11) at (3, -10) {};
        \node[main] (u12) at (3, -11) {};
        \node[main] (u13) at (3, -12) {};
        \node[main] (u14) at (3, -13) {};
        \node[main] (u15) at (3, -14) {};
        \node[main] (z1) at (6,.-7) {};
        \draw (x1) -- (u1);
        \draw (x5) -- (u2);
        \draw (x5) -- (u3);
        \draw (x5) -- (u4);
        \draw (x5) -- (u5);
        \draw (x5) -- (u6);
        \draw (x9) -- (u7);
        \draw (x9) -- (u8);
        \draw (x9) -- (u9);
        \draw (x9) -- (u10);
        \draw (x9) -- (u11);
        \draw (x9) -- (u12);
        \draw (x9) -- (u13);
        \draw (x9) -- (u14);
        \draw (x9) -- (u15);
        \draw (u1) -- (z1);
        \draw (u2) -- (z1);
        \draw (u3) -- (z1);
        \draw (u4) -- (z1);
        \draw (u5) -- (z1);
        \draw (u6) -- (z1);
        \draw (u7) -- (z1);
        \draw (u8) -- (z1);
        \draw (u9) -- (z1);
        \draw (u10) -- (z1);
        \draw (u11) -- (z1);
        \draw (u12) -- (z1);
        \draw (u13) -- (z1);
        \draw (u14) -- (z1);
        \draw (u15) -- (z1);
        \node[main] (y2) at (9, -0.5) {};
        \node[main] (y6) at (9, -5) {};
        \node[main] (y7) at (9, -11.5) {};
        \node[main] (w1) at (12, 0) {};
        \node[main] (w2) at (12, -1) {};
        \node[main] (w3) at (12, -2) {};
        \node[main] (w4) at (12, -3) {};
        \node[main] (w5) at (12, -4) {};
        \node[main] (w6) at (12, -5) {};
        \node[main] (w7) at (12, -6) {};
        \node[main] (w8) at (12, -7) {};
        \node[main] (w9) at (12, -8) {};
        \node[main] (w10) at (12, -9) {};
        \node[main] (w11) at (12, -10) {};
        \node[main] (w12) at (12, -11) {};
        \node[main] (w13) at (12, -12) {};
        \node[main] (w14) at (12, -13) {};
        \node[main] (w15) at (12, -14) {};
        \node[main] (z2) at (15,.-7) {};
        \draw (y2) -- (w1);
        \draw (y2) -- (w2);
        \draw (y6) -- (w3);
        \draw (y6) -- (w4);
        \draw (y6) -- (w5);
        \draw (y6) -- (w6);
        \draw (y6) -- (w7);
        \draw (y6) -- (w8);
        \draw (y7) -- (w9);
        \draw (y7) -- (w10);
        \draw (y7) -- (w11);
        \draw (y7) -- (w12);
        \draw (y7) -- (w13);
        \draw (y7) -- (w14);
        \draw (y7) -- (w15);
        \draw (w1) -- (z2);
        \draw (w2) -- (z2);
        \draw (w3) -- (z2);
        \draw (w4) -- (z2);
        \draw (w5) -- (z2);
        \draw (w6) -- (z2);
        \draw (w7) -- (z2);
        \draw (w8) -- (z2);
        \draw (w9) -- (z2);
        \draw (w10) -- (z2);
        \draw (w11) -- (z2);
        \draw (w12) -- (z2);
        \draw (w13) -- (z2);
        \draw (w14) -- (z2);
        \draw (w15) -- (z2);
    \end{tikzpicture}
    \caption{\label{fig:ga}The graph $\GG(\vc a)$ of the construction of Theorem~\ref{thm:prenpc}, for the sequence $\vc a=\langle 9, 7, 6, 5, 2, 1\rangle$; recall that this is a positive instance of TPP (without the further assumptions of Theorem~\ref{thm:tppnpc}), because $1+5+9=2+6+7=15$.}
\end{figure}

\begin{theorem}
    \label{thm:prenpc}
    Given a TPP instance $\vc a = \langle a_1, \dots, a_{3m}\rangle$, with $a_{3m} \geq 3$, $a_i-a_{i+1} \geq 3$ (for all $i=1,\dots,3m-1$) and $t =(\sum_{i=1}^{3m}a_i)/m\geq 4$, the following two statements are equivalent:
    \begin{enumerate}
        \item\label{enu:TPPone} $\vc a$ is a positive instance of Problem~\ref{prob:TPP};
        \item\label{enu:TPPtwo} $M(\vc a)$ is a DCM.
    \end{enumerate}
\end{theorem}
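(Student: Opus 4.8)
The strategy is to prove the two implications $(\ref{enu:TPPone})\Rightarrow(\ref{enu:TPPtwo})$ and $(\ref{enu:TPPtwo})\Rightarrow(\ref{enu:TPPone})$ separately; the forward one is a direct construction, the reverse one is where the work lies. For $(\ref{enu:TPPone})\Rightarrow(\ref{enu:TPPtwo})$ I would make the picture of Figure~\ref{fig:ga} precise. Given a witnessing partition $D_1,\dots,D_m$, with $D_j=\{i_1,i_2,i_3\}$ say, let $\GG(\vc a)$ be the disjoint union of $m$ gadgets, the one for $D_j$ consisting of three ``input'' vertices (one per element of $D_j$), a set $U_j$ of $t$ ``middle'' vertices partitioned into blocks of sizes $a_{i_1},a_{i_2},a_{i_3}$ (possible since $\sum_{i\in D_j}a_i=t$) with the $\ell$-th input vertex joined to its block of $a_{i_\ell}$ middle vertices, plus one ``apex'' vertex joined to all of $U_j$. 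Each gadget has $3+t+1=t+4$ vertices, so $\GG(\vc a)$ has $n$ vertices, and a short inspection of shortest paths inside a gadget (the apex is at distance $2$ from every input vertex; two distinct input vertices are at distance $4$; a middle vertex has its input and the apex at distance $1$, the other middles at distance $2$, the other inputs at distance $3$; and so on) shows that the DCM rows are exactly $[1,a,1,t-a,2,0,\dots]$ for an input vertex of value $a$, $[1,2,t-1,2,0,\dots]$ for a middle vertex, and $[1,t,3,0,\dots]$ for the apex, i.e.\ $N_{\GG(\vc a)}=M(\vc a)$ up to row order. This direction is routine.

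For $(\ref{enu:TPPtwo})\Rightarrow(\ref{enu:TPPone})$, suppose $\GG$ realises $M(\vc a)$ and classify each vertex as an $x$-, $u$- or $z$-vertex according to which of the three row shapes it carries. \emph{Block decomposition.} Every row of $M(\vc a)$ sums to $t+4$; by Corollary~\ref{C:increasing} the sum of row $i$ of a DCM is $|R^-(i)|$, where $R^-(i)$ denotes the set of vertices that reach $i$, and since $j\in R^-(i)$ implies $R^-(j)\subseteq R^-(i)$ with all these sets of size $t+4$, the ``reaches'' relation is an equivalence relation with classes of size $t+4$ and no arc joins two classes. Thus $\GG$ is the disjoint union of $m$ strongly connected \emph{blocks} of $t+4$ vertices each. \emph{Rigidity of a block.} The heart of the proof is to show every block $B$ is exactly one of the gadgets above (and, in particular, undirected). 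The data to exploit are column $1$ (the in-degree of each vertex, i.e.\ $a_i$, $2$, $t$ by type), column $2$ (the number of vertices at distance $2$, i.e.\ $1$, $t-1$, $3$ by type), the in-eccentricities ($4$, $3$, $2$ by type), and the necessary inequalities of Theorem~\ref{T:necc_cond2} applied to the CDCM induced on $B$, played off against the numeric slack in $t/4<a_i<t/2$, $a_i-a_{i+1}\ge 3$, $a_{3m}\ge 3$, $t\ge 4$. In order I would show: (a) no block can be $z$-free (tracing the constraints on $u$-vertices' in-neighbours, such a block would contain an $x$-vertex with no in-neighbour, contradicting strong connectivity), so---there being $m$ $z$-vertices and $m$ blocks---each block has exactly one $z$-vertex $z$, with exactly $t$ in-neighbours $P$ and three remaining vertices $Q$; (b) the $p=2$ instances of \eqref{E:necc_ineq_1} and \eqref{E:necc_ineq_2} severely restrict the in-neighbourhoods of $x$- and $u$-vertices (for instance, no $z$-vertex is an in-neighbour of an $x$-vertex, and every $u$-vertex has a $z$-vertex among its two in-neighbours); (c) combining (a) and (b), $P$ is forced to be the set of the $t$ $u$-vertices of $B$, $Q$ the set of the three $x$-vertices, and the only arcs present are those of the gadget. \emph{Conclusion.} In such a block each of the $t$ middle vertices is adjacent to exactly one $x$-vertex, and each $x_i$ has exactly $a_i$ neighbours, all middle vertices, so the three $x$-vertices of $B$ have values summing to $t$. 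Taking $D_B=\{i:x_i\in B\}$ over the $m$ blocks yields $m$ pairwise disjoint sets, each of sum $t$ and therefore (since $t/4<a_i<t/2$) of size exactly $3$; hence $\vc a$ is a positive instance.

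The step I expect to be the main obstacle is the block rigidity, items (b)--(c). Because $\GG$ is a priori directed, in-degrees do not control out-degrees, so one cannot just read off the gadget; instead every ``parasitic'' block must be excluded one by one---those with the wrong multiset of vertex types, and those exploiting directed shortcuts that nevertheless reproduce the prescribed in-distance profiles---and it is exactly the combination of Theorem~\ref{T:necc_cond2} with the arithmetic slack ($a_i$ strictly between $t/4$ and $t/2$, consecutive values at least $3$ apart, $t\ge 4$) that makes each case collapse. A related subtlety is to argue that a surviving block is genuinely undirected, so that the out-degree of an $x$-vertex equals its in-degree $a_i$ and the final count of the $t$ middle vertices against the arcs leaving the $x$-vertices is legitimate.
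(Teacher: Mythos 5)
Your proposal follows essentially the same route as the paper: the identical gadget for the forward direction, and for the converse the same decomposition into $m$ components of size $t+4$ followed by the same two adjacency restrictions extracted from Theorem~\ref{T:necc_cond2} at $p=2$ (every degree-$2$ vertex must have a group-III neighbour; group-I vertices can be adjacent only to group-II vertices, being too ``small'' for group III and too far apart from each other), and then the same final count forcing one group-III vertex, $t$ group-II vertices and three group-I vertices per component. The case analysis you defer in items (b)--(c) is exactly the paper's short table of admissible intervals for $m_2(i)$ and closes in a few lines rather than requiring an enumeration of parasitic blocks; if anything, the paper is less careful than you are about directedness (its converse argument silently treats the realizing graph as undirected), so your reachability-equivalence argument for the block decomposition is a sound refinement rather than a gap.
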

\begin{proof}
    \eqref{enu:TPPone} $\implies$ \eqref{enu:TPPtwo}. Let $D_1, \dots, D_m$ be the partition of $\{1,\dots,3m\}$ showing that the sequence is a positive instance. Build an undirected graph $\GG(\vc a)$ with $n$ vertices as follows:

        -- there is one vertex for every integer in the sequence $\vc a$; we will call these vertices $x_1, \dots, x_{3m}$;
       
        -- there are $t$ vertices for every class $D_j$; we will call them $y_1^j,\dots,y_t^j$ for each $j=1,\dots,m$;
        
        -- there is one vertex $z_j$ for every class $D_j$, $j=1,\dots,m$;
        
        -- for every $j=1,\dots,m$, suppose $D_j=\{i_1,i_2,i_3\}$: we add edges 
            
            \qquad  $\bullet$\ $x_{i_1} \leftrightarrow y_u^{j}$ for every $u=1,\dots, a_{i_1}$;
                
           \qquad $\bullet$\ $x_{i_2} \leftrightarrow y_u^{j}$ for every $u=a_{i_1}+1,\dots, a_{i_1}+a_{i_2}$;
               
          \qquad $\bullet$\ $x_{i_3} \leftrightarrow y_u^{j}$ for every $u=a_{i_1}+a_{i_2}+1,\dots, a_{i_1}+a_{i_2}+a_{i_3}=t$;
          
        \item for every $j=1,\dots,m$ and $u=1,\dots,t$, we add an edge $y_u^j \leftrightarrow z_j$.
 
    Each vertex $x_i$ has degree $a_i$, and each vertex $y_u^j$ has degree $2$ (because it is connected to $x_i$ for exactly one $i \in D_j$, and to $z_j$). Each $z_j$ has degree $t$.
    All vertices related to $j$ (i.e., $x_i$ for $i \in D_j$, $y_u^j$ for $u=1,\dots,t$, and $z_j$) form one connected component with $t+4$ vertices. There are $m$ such components.

    If $i \in D_j$, $x_i$ has only vertex $z_j$ at distance $2$, and $t-a_i$ vertices at distance $3$ (all the vertices of the form $y_u^j$ except those that are neighbors of $x_i$). Finally, $x_i$ has two vertices at distance three (exactly the other two vertices of the form $x_p$ for $p \in D_j$, and $p\neq i$: there are two of them, because $|D_j|=3$).
    Hence the row related to $x_i$ is of the form $[1, a_i, 1, t-a_i, 2, 0, 0, \dots]$.

    The vertices at distance two from $y_u^j$ are all the other $y_w^j$ (there are $t-1$ of them). The
    vertices at distance three are $2$ (the vertices of the form $x_p$ for $p \in D_j$, except the only one that is a direct neighbor of $y_u^j$). Hence, the row related to $y_u^j$ is of the form $[1, 2, t-1, 2, 0, 0, 0, \dots]$.

    Finally, the vertices at distance two from $z_j$ are all the vertices of the form $x_i$ for $i \in D_j$ (there are $3$ of them). Thus, the row related to $z_j$ is of the form $[1, t, 3, 0, 0, 0, \dots]$.

    It is immediate to check that the DCM of $\GG(\vc a)$ is $M(\vc a)$, meaning that $M(\vc a)$ is a DCM.

    \eqref{enu:TPPtwo} $\implies$ \eqref{enu:TPPone}. Suppose that $M(\vc a)$ is the DCM of some graph $\GG$. The graph has $n$ nodes, divided into three groups: 

        -- Group I: the first $3m$ vertices all have degree larger than 2 (because $\vc a$ is a TPP instance such that all of its elements are  $\geq 3$); group I corresponds to rows of the form $[1,a_i,1,t-a_i,2,0,\dots]$;
        
        -- Group II: the next $s$ vertices all have degree $2$; group II corresponds to rows of the form $[1,2,t-1,2,0,\dots]$;
       
        -- Group III: the last $m$ vertices all have degree $t$ (which is larger than $a_1$, hence of all $a_i$'s, because $a_i < t/2$); group III corresponds to rows of the form $[1,t,3,0,\dots]$;

    Observe that every row of $M(\vc a)$ has sum equal to $t+4$. Hence the graph has $m$ connected components, with $t+4$ vertices each. It is also worth taking note of the form of the rows of the CDCM of $\GG$:
  
        -- Group I:   $[1, a_i+1, a_i+2, t+2, t+4, t+4, \dots]$,
        
        -- Group II:  $[1, 3,     t+2,   t+4, t+4, t+4, \dots]$,

        -- Group III: $[1, t+1,   t+4,   t+4, t+4, t+4, \dots]$.
  
    Consider any component $X$ of $\GG$, and suppose that this component contains $A$ vertices of group I, $B$ vertices of group II and $C$ vertices of group III. The degree sequence of this component is (after reordering):
    \[
        [\underbrace{t,t,\dots,t}_{C}, \underbrace{a_{i_1},\dots,a_{i_A}}_A, \underbrace{2,\dots,2}_B]
    \]
    for some choice of $i_1>i_2>\dots>i_A$. Note that $A+B+C=t+4$ (the size of each component).

    Start by considering a vertex $i \in X$ of group II: it is connected to two vertices $j_1$ and $j_2$, which may each belong, in principle, to one of the three groups. 
    We shall apply Theorem~\ref{T:necc_cond2} to the row of vertex $i$ in the CDCM of $\GG$: depending on how we choose $j_1$ and $j_2$, we have the following possible intervals of integers for the third value in the row we are considering (here, we are assuming without loss of generality that $a_h>a_k$):
    \[
        \begin{tabular}{c||c|c|c}
            \diagbox{$j_1$ (degree)}{$j_2$ (degree)} & Group I ($a_k$) & Group II ($2$) & Group III ($t$) \\
            \hline
            Group I ($a_h$)  & $a_h+1 \dots a_h+a_k+2$ & $a_h+1 \dots a_h+4$ & $t+1 \dots a_h+t+2$ \\
            \hline
            Group II ($2$) & & $3\dots 6$ & $t+1 \dots t+4$ \\
            \hline
            Group III ($t$) & & & $t+1 \dots 2t+2$ \\
        \end{tabular}
    \]
    The third value in the row of the CDCM for group II is $t+2$, and the only intervals containing $t+2$ are those in the last column (since $a_{3m}\geq 3$, $t$ is certainly larger than 6). This means that every vertex of group II is connected to at least one vertex of group III.
    As a consequence, $B>0$ implies $C>0$.
    
    \smallskip
    On the other hand, applying again Theorem~\ref{T:necc_cond2}, no vertex $i \in X$ of group I can be connected to a node of group III, because $t+1>a_i+2$. Also, no two vertices of group I can be connected to each other: suppose, by contradiction that a vertex of degree $a_h$ is connected to a vertex of degree $a_k$ with $a_h>a_k$. Then we should have (by Theorem~\ref{T:necc_cond2}) $a_h+1 \leq a_k+2$, that is $a_h-a_k \leq 2$, which is impossible because $a_h-a_k \geq 3$. 
    We conclude that vertices of group I are connected only to vertices of group II. As a consequence, $A>0$ implies $B>0$.
    
    For the reasons above, every component of $\GG$ must contain at least one vertex of group III. Since there are $m$ components and exactly $m$ vertices of group III, we conclude that each component contains \emph{exactly one} vertex of group III, that is, $C=1$.

    The $A$ vertices of group I in component $X$ are all connected to vertices of group II (this is, as we discussed above, the only possibility for them), and those vertices of group II must be distinct (they have degree 2, and each of them is connected to at least one vertex of group III); so there must  be at least $a_{i_1}+\dots+a_{i_A}$ vertices of group II in the component, say $a_{i_1}+\dots+a_{i_A}+e_X$ for some $e_X\geq 0$. Assume for the moment that $e_X=0$.

    The only vertex of group III is itself connected to $t$ vertices of group II, so we conclude that $a_{i_1}+\dots+a_{i_A}=t$ and $B=t$. Since $A+B+C=t+4$, we have $A+t+1=t+4$, hence $A=3$.  

    These conditions, together with the fact that this property holds for all components of $\GG$, allows us to conclude that $\vc a$ is a positive instance of TPP.

    Finally, note that $e_X$ \emph{must} be zero, because if even one component contained some extra vertex of group II, then altogether there would be  more than $s$ vertices of group II in the graph, which is impossible.\qed
\end{proof}

\begin{corollary}
    Recognizing whether a given matrix is a (C)DCM is strongly NP-complete. 
\end{corollary}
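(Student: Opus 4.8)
\medskip
\noindent\textbf{Proof proposal.}
The plan is to read the corollary off Theorem~\ref{thm:prenpc} together with the strong NP-completeness of the restricted three-partition problem (Theorem~\ref{thm:tppnpc}), after checking that the recognition problem lies in NP and that its DCM- and CDCM-variants are polynomially interreducible. Membership in NP is easy: given a candidate matrix $M\in\RR^{n\times n}$, a certificate is a graph $\GG$ on the vertex set $[n]$, supplied as an $O(n^2)$-bit adjacency matrix; the verifier rejects $M$ at once unless all its entries are integers in $[0,n]$, and otherwise runs a breadth-first search from each vertex of $\GG$ to compute the (C)DCM of $\GG$ in polynomial time, and then checks that, after sorting the rows of both matrices lexicographically, the two agree (this is legitimate by Proposition~\ref{P:DCM2}(h): a matrix is a (C)DCM iff its canonical form is). Hence the problem is in NP.

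For NP-hardness we reduce from TPP under the restrictions of Theorem~\ref{thm:tppnpc} taken with the constant $K=3$, which is strongly NP-complete. Given such an instance $\vc a=\langle a_1,\dots,a_{3m}\rangle$ -- distinct $a_i$, $t=(\sum_i a_i)/m\geq 4$, $a_{3m}\geq 3$, and $a_i-a_{i+1}\geq 3$ for every $i$ -- we form the matrix $M(\vc a)\in\RR^{n\times n}$ with $n=4m+s=(t+4)m$ exactly as defined just before Theorem~\ref{thm:prenpc}. This construction runs in time polynomial in the size of $\vc a$: the matrix has $n$ rows and $n$ columns, $n$ is polynomial in the (unary) size of $\vc a$, and every entry is a nonnegative integer at most $n$. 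By Theorem~\ref{thm:prenpc}, $\vc a$ is a positive TPP instance if and only if $M(\vc a)$ is a DCM, so DCM-recognition is NP-hard; moreover the numbers occurring in $M(\vc a)$ are polynomially bounded in the input length, so this is a \emph{strong} NP-hardness reduction (in fact every yes-instance of DCM-recognition necessarily has entries $\leq n$, so for this problem ``NP-complete'' and ``strongly NP-complete'' coincide). Together with membership in NP, DCM-recognition is strongly NP-complete.

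It remains to pass from DCM to CDCM. By Proposition~\ref{P:DCM2}(f) and~(g), a matrix $N$ is the DCM of a graph if and only if the matrix $M$ of its row-wise prefix sums, $m_{i,k}=n_{i,0}+\cdots+n_{i,k}$, is the CDCM of the same graph, and $N$ is recovered from $M$ by taking row-wise successive differences. These two transformations are polynomial-time and mutually inverse, hence they are polynomial-time reductions between DCM- and CDCM-recognition in both directions, and CDCM-recognition is strongly NP-complete as well. (Explicitly, the row-wise prefix-sum matrix of $M(\vc a)$ has the Group~I/II/III rows $[1,a_i+1,a_i+2,t+2,t+4,\dots]$, $[1,3,t+2,t+4,\dots]$ and $[1,t+1,t+4,\dots]$ displayed in the proof of Theorem~\ref{thm:prenpc}, and it is a CDCM precisely when $\vc a$ is positive.)

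Essentially all of the difficulty is already contained in Theorem~\ref{thm:prenpc}, and in particular in the implication \eqref{enu:TPPtwo}$\implies$\eqref{enu:TPPone}, where the separation hypotheses $a_i-a_{i+1}\geq 3$ and $a_{3m}\geq 3$, fed through the necessary inequalities of Theorem~\ref{T:necc_cond2}, pin down every connected component of an arbitrary realizing graph to the rigid shape ``three vertices of degrees $a_{i_1},a_{i_2},a_{i_3}$, then $t$ vertices of degree $2$, then one hub of degree $t$'', which forces $a_{i_1}+a_{i_2}+a_{i_3}=t$. For the corollary on top of that theorem the only points that need any care are the NP-membership certificate and the bookkeeping of the DCM$\leftrightarrow$CDCM translation; I expect no genuine obstacle beyond those.
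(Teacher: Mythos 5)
Your proposal is correct and follows the same route as the paper: the reduction is exactly Theorem~\ref{thm:prenpc} applied to the restricted TPP of Theorem~\ref{thm:tppnpc}, with strong NP-completeness needed because $n=4m+s$ is only pseudopolynomial in the binary encoding of $\vc a$. The paper leaves the NP-membership certificate and the DCM$\leftrightarrow$CDCM prefix-sum interreduction implicit, and you supply them correctly.
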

\begin{proof}
    Theorem~\ref{thm:prenpc} reduces TPP to recognizing DCMs, hence the result follows by Theorem~\ref{thm:tppnpc}. The fact that TPP is strongly NP-complete is needed for the construction to be admissible (the size of the matrix is pseudopolynomial in the input sequence $\vc a$).\qed
\end{proof}

\section{Conclusions and Future Work}

As mentioned, our negative result is relevant especially in the area of axiomatic centrality, when one tries to build counterexamples for specific properties of geometric centralities. On the other hand, there are families of graphs for which the DCM decision problem is feasible (e.g., out-directed trees), and there are many operations that preserve this property (e.g., Kronecker graph product): a natural area for future research will be to find large families of graphs for which the decision problem can be solved in polynomial time.

{\footnotesize
    \section*{\small Acknowledgements}
    
    This work was supported in part by project SERICS (PE00000014) under the NRRP MUR program funded by the EU - NGEU. Views and opinions expressed are those of the authors, and do not necessarily reflect those of the European Union or the Italian MUR. Neither the European Union nor the Italian MUR can be held responsible for them.
}

\bibliographystyle{plain}
\bibliography{biblio,ian,extra}

\end{document}